\documentclass[letterpaper, 10 pt, conference]{ieeeconf}
\IEEEoverridecommandlockouts                             
\overrideIEEEmargins                                     
\usepackage{amsmath,mathtools,amssymb}

\usepackage{algorithm,algpseudocode,algorithmicx}
\usepackage{cite,cleveref,enumitem}
\usepackage{caption,subcaption}
\usepackage{flushend}
\usepackage{xcolor}
\usepackage{url}

\usepackage{graphicx}					
\graphicspath{{figures/}}

%\crefname{app}{Appendix}{Appendices}
%\crefname{cor}{Corollary}{Corollary}
%\crefname{prop}{Proposition}{Proposition}
%\crefname{lemma}{Lemma}{Lemma}
%\crefname{defn}{Definition}{Definition}
%\crefname{conj}{Conjecture}{Conjecture}
%\crefname{exam}{Example}{Example}
%\crefname{supp}{Supplemental Section}{Supplemental Section}

\newtheorem{theorem}{Theorem}

\newtheorem{prop}{Proposition}

\newtheorem{defn}{Definition}

\newcommand{\bs}{\boldsymbol}
\newcommand{\bb}{\mathbb}
\newcommand{\mcal}{\mathcal}

\newcommand{\eye}{\bs{I}}
\newcommand{\zero}{\bs{0}}

\newcommand{\lb}{\left(}
\newcommand{\rb}{\right)}
\newcommand{\ls}{\left[}
\newcommand{\rs}{\right]}
\newcommand{\lc}{\left\{}
\newcommand{\rc}{\right\}}

\newcommand{\lv}{\left\vert}
\newcommand{\rv}{\right\vert}

\newcommand{\LRV}[1]{{\left\vert\kern-0.25ex\left\vert\kern-0.25ex\left\vert #1 \right\vert\kern-0.25ex\right\vert\kern-0.25ex\right\vert}}
\newcommand{\matV}[1]{{\left\vert\kern-0.25ex\left\vert\kern-0.25ex\left\vert #1 \right\vert\kern-0.25ex\right\vert\kern-0.25ex\right\vert}}

\newcommand{\T}{\mathsf{T}}
\newcommand{\trace}[1]{\mathsf{Tr}\lc#1\rc}

\newcommand{\rank}[1]{\mathsf{Rank}\lc#1\rc}

\newcommand{\matA}{\bs{A}}
\newcommand{\matB}{\bs{B}}

\newcommand{\matI}{\bs{I}}

\newcommand{\matL}{\bs{L}}
\newcommand{\matM}{\bs{M}}

\newcommand{\matW}{\bs{W}}

\newcommand{\bbE}{\bb{E}}

\newcommand{\bbR}{\bb{R}}

\newcommand{\calA}{\mcal{A}}
\newcommand{\calB}{\mcal{B}}

\newcommand{\calI}{\mcal{I}}

\newcommand{\calS}{\mcal{S}}
\newcommand{\calT}{\mcal{T}}
\newcommand{\calU}{\mcal{U}}
\newcommand{\calV}{\mcal{V}}

\newcommand{\vecu}{\bs{u}}

\newcommand{\vecx}{\bs{x}}

% Added by chandrasekhar

% Added by KPraveen
\newcommand{\eps}{\epsilon}

\newcommand*{\qedb}{\null\nobreak\hfill\ensuremath{\blacksquare}}

\algdef{SE}[DOWHILE]{Do}{doWhile}{\algorithmicdo}[1]{\algorithmicwhile\ #1}%

\Crefname{figure}{Fig.}{Figs.}
\Crefname{table}{Tab.}{Tabs.}
\Crefname{section}{Sec.}{Secs.}

\title{Sparse Actuator Scheduling for Discrete-Time Linear Dynamical Systems}

\author{Krishna Praveen V. S. Kondapi$^{1}$, Chandrasekhar Sriram$^{2}$, Geethu Joseph$^{3}$ and Chandra R. Murthy$^{1}$ \\ {\it Fellow, IEEE} 
\thanks{$^{1}$K. Kondapi and C. R. Murthy are with the Dept. of ECE at the Indian Institute of Science (IISc), Bangalore 560012, India. Emails:
        {\tt\small \{praveenkvsk, cmurthy\}@iisc.ac.in}.}%
        \thanks{$^{2}$C. Sriram is with Texas Instruments India Pvt. Ltd. He was at the Dept. of ECE, IISc, during the course of this work. Email: {\tt\small chandrasekhars@alum.iisc.ac.in}}
\thanks{$^{3}$G. Joseph is with the Faculty of Electrical Engineering, Mathematics, and Computer Science at the Delft University of Technology, Delft 2628 CD,  Netherlands. Email:
        {\tt\small G.Joseph@tudelft.nl}.}
}

\begin{document}

\maketitle
\thispagestyle{empty}
\pagestyle{empty}

%%%%%%%%%%%%%%%%%%%%%%%%%%%%%%%%%%%%%%%%%%%%%%%%%%%%%%%%%%%%%%%%%%%%%%%%%%%%%%%%
\begin{abstract}
We consider the control of discrete-time linear dynamical systems using sparse inputs where we limit the number of active actuators at every time step. We develop an algorithm for determining a sparse actuator schedule that ensures the existence of a sparse control input sequence, following the schedule, that takes the system from any given initial state to any desired final state. Since such an actuator schedule is not unique, we look for a schedule that minimizes the energy of sparse inputs. For this, we optimize the trace of the inverse of the resulting controllability Gramian, which is an approximate measure of the average energy of the inputs. We present a greedy algorithm along with its theoretical guarantees. Finally, we empirically show that our greedy algorithm ensures the controllability of the linear system with a small number of active actuators per time step without a significant average energy expenditure compared to the fully actuated system.
%a better performance in the low sparsity compared to other schedulers at low sparsity and similar performance at high sparsity.

\end{abstract}
\begin{keywords}
    Greedy algorithm, controllability, time-varying schedule, sparse actuator selections, piecewise sparsity.
\end{keywords}

%%%%%%%%%%%%%%%%%%%%%%%%%%%%%%%%%%%%%%%%%%%%%%%%%%%%%%%%%%%%%%%%%%%%%%%%%%%%%%%%

\section{Introduction}
\label{sec:intro}
The field of networked control systems, in which the controllers, sensors, and actuators communicate over a band-limited network to achieve a given control objective, is a growing area of research \cite{control_comm_constraint, jadhbabaie2019, NAGAHARA201184}. Due to the bandwidth constraints, these systems demand communication-efficient control inputs. One way to reduce the communication cost is to use only a few available actuators at any given time instant, i.e.,  the number of nonzero entries of the control input is small compared to its length, leading to a \emph{sparse control input}~\cite{Li_2016_CSinSwitchedSys,NAGAHARA201184}. The sparse vectors admit compact representation in a suitable basis~\cite{foucart2013math}, and thus, save bandwidth. Sparse control also finds applications in resource-constrained systems such as environmental control systems and network opinion dynamics \cite{joseph2020controllability, WENDT201937}. Motivated by this, we consider the problem of designing an actuator schedule for sparse control inputs to drive the system to any given desired state.

The controllability of a discrete-time linear dynamical system (LDS) using sparse inputs, along with its necessary and sufficient conditions, was defined and developed in \cite{gjoseph2020controllability}.  Here, at each time step, the input can utilize at most $s$ actuators, but the selected subset of $s$ actuators can be time-varying. However, the work does not provide the actuator schedule or design the sparse inputs. Some researchers have considered the actuator scheduling for continuous-time systems by maximizing the trace of the controllability Gramian with a limit on the number of active actuators per time step~\cite{Olshvesky_2020_ActRlx,Ikeda_2018_TVNodeSlct,Fotiadis_2021_ActPlctLearn,Fotiadis_2022_DataBasedActSlct}. This problem has been extended to cases where system dynamics are unknown, resulting in the development of an online, data-driven approach for learning the optimal actuator placement~\cite{Fotiadis_2021_ActPlctLearn,Fotiadis_2022_DataBasedActSlct}. Some works have considered a restrictive variant where the subset of chosen actuators remains fixed across time~\cite{Olshevesky_2014_MinControl,Tzoumas_2016_CtrlEffort,Dilip_2019_CtrlGramOptSlct}. However, time-varying actuator scheduling can potentially ensure the controllability of systems that are uncontrollable systems when using inputs with fixed support, and could guide the LDS to the desired state faster than time-invariant actuator scheduling. A variant of time-varying sparse actuator scheduling problem considered in the literature only limits the \emph{average} number of active actuators across all the time steps~\cite{jadbabaie2018deterministic,SIAMI2020109054}. They developed schedulers that yield a Gramian matrix that approximates the Gramian matrix of the fully actuated system. An alternative approach for the time-varying actuator scheduling problem is using a linear-quadratic regulator or linear-quadratic-Gaussian control~\cite{Chamon_2019_MatroidOpt,Jiao_2023_ActAchCvxRlx}. The authors in~\cite{Zhao_2016_NodeSchControl} also consider a time-varying sparse schedule with the goal of obtaining controllability with a symmetric transfer matrix $\matA$. 
%These approaches does not guarantee the controllability of the system.} \textcolor{red}{These two sentences seem to contradict each other. If \cite{7798535} aims to achieve controllability, why don't they meet that aim?} 
To the best of our knowledge, our problem of identifying a time-varying sparse actuator schedule that ensures controllability with minimal control energy has not been explored in the literature.
%We use some of the ideas in this work to find an actuator schedule such that the system is $s$-sparse controllable. To summarize, sparse control with inputs having dynamic support has not been well studied in the literature, to the best of our knowledge. 

We formulate our actuator scheduling problem as minimizing the trace of the inverse of the controllability Gramian, a widely used metric for quantifying average control enegry~\cite{jadbabaie2018deterministic}. This optimization is subject to a constraint limiting the active actuators to at most $s$ per time step. The specific contributions of this paper are as follows:
\begin{itemize}
	\item We show that our objective function exhibits approximate supermodularity, and the sparsity constraint is a matroid. This result leads to a greedy algorithm for finding the sparse actuator schedule with provable guarantees.
    \item Using our algorithm, we empirically study the tradeoff between the sparsity and average energy for time-invariant and time-varying schedules. We observe that, for large random systems with sparse and time-varying schedules, the increase in the average control energy compared to the average energy required by a fully actuated system is inversely proportional to the fraction of active actuators. %This observation is supported by an intuitive explanation grounded in matrix theory.
\end{itemize}
Overall, we present a systematic study of sparse time-varying actuator scheduling, develop a novel scheduling algorithm with theoretical guarantees, and empirically elucidate its energy requirements.

\section{Actuator Scheduling Problem}
\label{sec:sys_model}
We consider a discrete-time LDS whose dynamics are governed by the following equation:
\begin{equation}
\vecx (k+1) = \matA \vecx (k) + \matB \vecu(k), \label{eq:state}
\end{equation}
where $k = 0,1,\ldots$ is the integer time index. Here, $\vecx (k) \in \mathbb{R}^n $ is the system state and $\vecu(k) \in \mathbb{R}^m$ is the input to the system. %, and $\vecy(k) \in \mathbb{R}^p$ is the output of the LDS. 
The matrices $\matA \in \mathbb{R}^{n \times n}$ and $\matB \in \mathbb{R}^{n \times m}$
%, and  $\matC \in \mathbb{R}^{p \times n}$ 
are the system transfer matrix and the input matrix, 
%and the input matrix 
respectively. We aim to design $s$-sparse inputs, i.e., $\Vert\vecu(k)\Vert_0\leq s$, for all values of $k$ to drive the system to a desired state. 
Finding sparse inputs is hard because of the nonconvex sparsity constraint. However, if the support $\calS_k$ of $\vecu_k$ is known, we can easily find the sparse vectors using techniques like the least squares method. Therefore, we aim to design a sparse actuator schedule (support of the sparse inputs), independent of the initial and final state of the LDS,  that can drive the LDS to any given desired state in $K$ time steps. %\textcolor{black}{In the sequel, we term this as a \emph{state-independent} actuator schedule since the schedule is independent of the initial and final states.} 

We denote the sparse actuator schedule by $(\calS_0,\calS_1,\ldots,\calS_{K-1})$, where, $\calS_k\subseteq \{1,2,\ldots,m\}$ and $ \vert \calS_k \vert \leq s$ for all values of $k$ to satisfy the sparsity constraint. \textcolor{black}{ This sparsity structure is referred to as piecewise sparsity.}
%This results in $s$-sparse inputs i.e., $\Vert\vecu_k\Vert_0\leq s$, for all values of $k$. The conditions and algorithm to find s-sparse control inputs to hold the system in desired are discussed in ~\cite{icassp_paper}.
It is known that for any  $s$-sparse controllable LDS,
%for any initial and final states, there exists a sparse input sequence $\vecu(k)$ for  $k = 0, 1, \ldots, K-1<\infty$ with $\Vert\vecu(k)\Vert_0\leq s$, which steers the system from the state $\vecx(0) = \vecx_0$ to $\vecx(K)=\vecx_f$. 
% From \eqref{eq:state}, the condition $\vecx(K)=\vecx_f$ leads to
% \begin{equation} \label{eq:control_eq}
% 	\vecx(K) - \matA^K \vecx_0 = \matR_{(K)} \tilde{\vecu}_{(K)},
% \end{equation}
% where we define the controllability matrix and joint input vector as
% \begin{align}
% \matR_{(K)} &\triangleq \begin{bmatrix}
%  \matA^{K-1} \matB & \matA^{K-2} \matB & \ldots & \matB 
% \end{bmatrix} \in \mathbb{R}^{n \times mK}\\
%  \tilde{\vecu}(K) &\triangleq 
%  \begin{bmatrix}
%    \vecu (0)^{\T} & \vecu (1)^{\T} \ldots \vecu (K-1)^{\T} 
% \end{bmatrix}
% ^{\T} \in \mathbb{R}^{mK}.\label{eq:defn_tildeu}
% \end{align}
we can find a finite $K$ and sparse actuator schedule $(\calS_0,\calS_1,\ldots,\calS_{K-1})$ with $\vert \calS_k \vert \leq s$ such that
%$\rank{ \matR_{(K)}^{\calS}} = n$, where 
% \begin{equation}
% \matR_{(K)} \triangleq \begin{bmatrix}
%  \matA^{K-1} \matB & \matA^{K-2} \matB & \ldots & \matB 
% \end{bmatrix} \in \mathbb{R}^{n \times mK}.
% %  \tilde{\vecu}(K) &\triangleq 
% %  \begin{bmatrix}
% %    \vecu (0)^{\T} & \vecu (1)^{\T} \ldots \vecu (K-1)^{\T} 
% % \end{bmatrix}
% \end{equation}
\begin{equation} \label{eq:control_con}
\rank{\begin{bmatrix} 
\matA^{K-1} \matB_{\calS_0} & \matA^{K-2} \matB_{\calS_1} & \ldots & \matB_{\calS_{K-1}} 
\end{bmatrix}}=n.
\end{equation}
Sparse controllability is equivalent to two conditions being satisfied: the system is controllable and $s\geq n-\rank{\matA}$~\cite{gjoseph2020controllability}.
So, in the sequel, to ensure that the problem is feasible, we assume that these two conditions hold. 

%We note that the actuator schedule $\calS$, i.e., the support sequence of the sparse inputs is independent of the initial and final state of the LDS. That is, there exists a \emph{single} sequence of sparse supports that can drive the system from \emph{any} initial state to \emph{any} final state in $K$ steps. Hence, we aim to find a sparse actuator schedule $\calS$ for a given sparse controllable LDS that achieves $\rank{ \matR_{(K)}^{\calS}} = n$. 

% \section{Driving to a Desired State} \label{sec:greedy_sch}`
% We discuss an approach to steer an $s$-sparse controllable LDS to a desired state using $s$-sparse inputs.
% We design sparse inputs such that the actuator schedule is fixed and does not change with the initial and final states.

% \subsection{State-independent Actuator Scheduling}
% \label{sec:state_independent}
% The goal of this section is to find an {unweighted} actuator schedule or, equivalently, to find the  set  $\calS = \{\calS_i\subseteq \{1,2,\ldots,m\}: \vert \calS_i \vert \leq s\}_{i=0}^{K-1}$  for which $\rank{ \matR_{(K)}^{\calS}} = n$. Using the actuator schedule $\calS$, the nonzero part of sparse inputs can be obtained as 
% \begin{equation}\label{eq:nonzero_input}
% \tilde{\vecu}_{(K),\calS}=\lb\matR_{(K)}^{\calS}\rb^{\T}\ls \matR_{(K)}^{\calS}\lb\matR_{(K)}^{\calS}\rb^{\T}\rs^{-1}(\vecx_f - \matA^K \vecx_0).
% \end{equation}

Further, for an $s$-sparse controllable LDS, the minimum number of time steps $K$ required to ensure controllability is bounded as~\cite{gjoseph2020controllability}  
\begin{equation} \label{eq:sparse_cntrl_bounds}
\frac{n}{s} \leq K \leq \min\lb q \left\lceil \frac{\rank{\matB}}{s} \right\rceil, n-s+1 \rb \leq n.
\end{equation}
where $q$ is the degree of the minimum polynomial of $\matA$. For simplicity, we let $K=n$.
% From \eqref{eq:sparse_cntrl_bounds}, we note that there exists an actuator schedule of length $K\leq n$ for which $\rank{ \matR_{(K)}^{\calS}} = n$. So for ease of exposition, we fix $K=n$. 
%However, our approach directly applies to any $K$ for which a schedule exists such that \eqref{eq:control_con} holds. 
Our goal is to find an actuator schedule $\calS = (\calS_0,\calS_1,\ldots,\calS_{n-1})\in\Phi$ such that \eqref{eq:control_con} holds, where the feasible set $\Phi$ for $\mathcal{S}$ is defined as
\begin{equation}\label{eq:feasible_set}
    \Phi \!=\! \lc\!(\calS_0,\calS_1,\ldots,\calS_{n-1})\!\!:\calS_k\!\subseteq \!\{1,2,\ldots,m\}, \vert \calS_k \vert \leq s,\forall k\rc.
\end{equation}

The rank of the controllability matrix can be analyzed using the Gramian matrix. For any actuator schedule $\calS = (\calS_0,\calS_1,\ldots,\calS_{n-1})\in\Phi$, the corresponding Gramian  is
\begin{equation}\label{eq:W_mat_Defn}
	\matW_{\calS} = \sum_{k=1}^{n} \matA^{k-1} \matB_{\calS_{n-k}} \matB_{\calS_{n-k}}^{\T} (\matA^{k-1})^{\T}.
\end{equation}
%\sout{We note that $\matW_{\calS}$ is a submatrix of the unconstrained Grammian,} \textcolor{red}{how is it a submatrix?}. 
Our goal is to select the actuator schedule $\calS$ so that $\rank{\matW_{\calS}}=n$, which in turn ensures that the system is controllable. However, the solution does not need to be unique. Therefore, we look at minimizing $\trace{\matW_{\calS}^{-1}}$ which is a popular measure of the ``difficulty of controllability''~\cite{Olshevsky_2018_NonSupermodular}. It represents the average energy to drive the LDS from $\vecx(0)=\zero$ to a uniformly random point on the unit sphere. % i.e., $\mathbb{E} \lc \vecx_f \vecx_f^{\T}\rc = \eye$. From \eqref{eq:nonzero_input}, we have
% \begin{align} 
% \mathbb{E}
% \lc \tilde{\vecu}_{(n)}^{\T} \tilde{\vecu}_{(n)}\rc
%   & = \trace{ \lb\matR_{(n)}^{\calS}\rb^{\T} \matW_{\calS}^{-1} \mathbb{E} \lc\vecx_f \vecx_f^{\T}\rc \matW_{\calS}^{-1} \matR_{(n)}^{\calS}}\\&=\trace{ \matW_{\calS}^{-1}}.
% \end{align}
% Also, $\trace{ \matW_{\calS}^{-1}}$ is finite only if $\matW_{\calS}$ is full rank. Therefore, our scheduling problem is formulated as
% \begin{equation}\label{eq:actuatorschedule}
%     \underset{\calS \in\Phi}{\arg\min} \trace{\matW_\calS^{-1}}.
% \end{equation}
% where $\Phi$ is defined in \eqref{eq:feasible_set} and $\matW_\calS$ is given by \eqref{eq:W_mat_Defn}. The above combinatorial optimization problem is difficult to solve exactly. So, 
We present a greedy algorithm to minimize~$\trace{ \matW_{\calS}^{-1}}$ next.

\section{Greedy Scheduling Algorithm} \label{sec:greedy_sch}
%One approach to solve \eqref{eq:opt_energy_epsilon} is to use a greedy approach. 
First, we note that $\trace{ \matW_{\calS}^{-1}}$ is not well-defined if $\matW_\calS$ is rank-deficient. So, we 
%we consider an approximation of the optimization problem in \eqref{eq:actuatorschedule}. We 
use the lower bound $\trace{(\matW_{\calS} + \epsilon \matI)^{-1}} (\leq \trace{(\matW_\calS)^{-1}})$ as an alternative cost function, referred to as the $\epsilon$-auxiliary energy, for some small $\epsilon>0$. The term $\epsilon \matI$ guarantees that the inverse exists and the cost function is well-defined for any schedule. Thus, our new optimization problem is
\begin{equation}
 \label{eq:opt_energy_epsilon}
    \underset{\calS \in\Phi}{\arg\min} \; \trace{ (\matW_\calS + \epsilon \matI)^{-1} }.
\end{equation}
where $\Phi$ is defined in \eqref{eq:feasible_set} and $\matW_\calS$ is given by \eqref{eq:W_mat_Defn}. As $\epsilon$ decreases, \eqref{eq:opt_energy_epsilon} becomes a better approximation of the original objective of attaining full rank matrix $\matW_\calS$. %More concretely, we have the following result from~\cite{7122316}. 
% \begin{lemma} \label{lma:controllabl_constr_sat}
% For the matrix $\matW_\calS$ defined in \eqref{eq:W_mat_Defn}, if the condition $\trace{(\matW_S + \epsilon \matI)^{-1}} \leq \frac{1}{\epsilon}$ holds, then $\rank{\matW_S} = n$.
% \end{lemma}

%We use the following approach to minimize $\trace{ \matW_{\calS}^{-1}}$. First, we initialize $\mathcal{S}_k=\emptyset$ for all values of $k$. Then, at each iteration, we include an element in one of the sets that minimizes the cost function. 
Next, we observe that any actuator schedule of length $n$ can be represented using a subset of the set 
\begin{equation}\label{eq:V_defn}
    \calV = \{ (k,j) \vert k=0,1,\ldots,n-1, \;j=1,2,\ldots,m \}.
\end{equation}
Here, $k$ represents the time index, and $j$ represents the actuator index. Let $2^\calV$ denote the power set of $\calV$. It is clear that there is a natural bijection between any $\calT\in 2^\calV$ and the corresponding actuator schedule:
\begin{equation}\label{eq:S2Tmapping}
    \calS(\calT) = (\calS_0,\calS_1,\ldots,\calS_{n-1})\;\;\text{with}\;\; \calS_k=\{j: (k,j)\in\calT\}.
\end{equation}
Therefore, the problem in \eqref{eq:opt_energy_epsilon} can be written using $\calV$ as 
\begin{equation}
 \label{eq:opt_energy_epsilon_mod}
\underset{\calT\in  2^\calV}\min \; \trace { (\matW_{\calS(\calT)} + \epsilon \matI)^{-1} }
\;\text{s.t.} \;\calS(\calT)\in\Phi.
\end{equation}

%To this end, let $2^\calV$ denote the power set of $\calV$ and define the function $E: 2^\calV\times\bb{R_{++}} \to \bb{R}$ as $E(\calT,\eps) = - \trace { (\matW_{\calS(\calT)} + \epsilon \matI)^{-1} }$. 
%with  the actuator schedule $\calS(\calT)$ satisfying the sparsity constraints. 
%For any $\calT\in 2^\calV$, the corresponding actuator schedule is
%\begin{equation}\label{eq:S2Tmapping}
%    \calS(\calT) = (\calS_0,\calS_1,\ldots,\calS_{n-1})\;\;\text{with}\;\; \calS_k=\{j: (k,j)\in\calT\}.
%\end{equation}
%Hence, \eqref{eq:opt_energy_epsilon} is equivalent to \textcolor{red}{why -Tr here?}
%\begin{equation}
% \label{eq:opt_energy_epsilon_mod}
%\underset{\calT\in  2^\calV}\max \; E(\calT,\epsilon) = - \trace { (\matW_{\calS(\calT)} + \epsilon \matI)^{-1} }
%\;\text{s.t.} \;\calS(\calT)\in\Phi.
%\end{equation}
%where $\Phi$, $\calV$, and $\calS(\calT)$ are defined in \eqref{eq:feasible_set}, \eqref{eq:V_defn}, and \eqref{eq:S2Tmapping}, respectively.

Using the new formulation, the greedy algorithm starts with $\mathcal{T}$ being the empty set and finds the element from $\calV$ which when added to $\calT$ minimizes the cost function. 
%adds an element from $\calV$ to the set $\calT$, leading to the minimum cost function in each iteration. 
Specifically, in the $r$th iteration of the algorithm, let $\calT^{(r)}$ set of indices collected up to the previous iteration. Then, we find
\begin{multline}\label{eq:greedy_update}
    (k^*,j^*) = \underset{(k,j)\in  \calV^{(r)}}{\arg\min} \trace { (\matW_{\calS(\calT^{(r)}\cup\{(k,j)\})} + {\epsilon} \matI)^{-1} },
\end{multline}
where $\calV^{(r)} \subseteq \calV\setminus\calT^{(r)}$ is obtained by removing all infeasible index pairs, i.e., 
\begin{equation} \label{eq:search_space}
    \calV^{(r)} = \left\{ (i,j) \in \calV \setminus \calT^{(r)}: \calS(\calT^{(r)} \cup (i,j)) \in \Phi \right\}.
\end{equation}

%\textcolor{red}{where $\calV^{(r)} \in 2^{\calV}$ obtained by removing $\calT^{(r)}$ and all the infeasible points in the subsequent iterations, i.e., 
%\begin{equation}
%   \lc (k,j)\in\calV: \calS(\calT^{(r-1)})=(\calS_1,\calS_2,\ldots,\calS_n),\;\;\lv\calS_k\rv=s\rc.
%\end{equation}}
 
Finally, to complete the algorithm's description, we must choose $\epsilon$. We start with some $\epsilon_0>0$ and repeat the greedy algorithm (outer loop) for decreasing values of $\epsilon$ until we obtain a full-rank Gramian matrix. The overall procedure is outlined in \Cref{alg:greedy_algo_epsilon}. %\textcolor{black}{Note that \Cref{alg:greedy_algo_epsilon} returns a sparse actuator schedule $\calS$ that is independent of the initial and final state}. 

%\textcolor{red}{It makes no sense to write the blue sentence at the end of the previous paragraph because it is completely unrelated to anything else that is described here. The rule of thumb is simple: Each paragraph should convey ONE idea. The previous paragraph's goal is to say that we can use the approach outlined above to develop an iterative algorithm that starts at some epsilon and keeps decreasing epsilon. Also, it points to Algorithm 1.}

%\textcolor{red}{To describe the ``state-independent'' part of the title of Algorithm 1, go back to where you are describing the problem statement, and find the right place to insert a comment like ``we wish to determine a sparse actuator schedule that can be determined independent of the initial and final states; this will be referred to as a \emph{state-independent} sparse activator schedule in the sequel.}

%\textcolor{red}{I have done this for you, please see the sentence in red in Sec. II. Then, you can comment out all these red comments, the blue sentence, and change the red bit in Sec. II to black (after you have read it carefully and corrected any errors you find in it.)} % \textcolor{black}{State-independent}

\begin{algorithm}
	\caption{Greedy actuator scheduling} 
	\begin{algorithmic} [1]
		\Require System matrices: $\matA,\matB$; sparsity $s$
		\Statex \hspace{-0.72cm} \textbf{Parameters:} $ \epsilon_0>0$, $c > 1$
		\State Initialize outer loop index $t=0$
        \Do
                \State Initialize $r=1$, $\calT^{(r)} = \emptyset$, $\calV^{(r)}=\calV$ 
    		\While{$\calV^{(r)} \neq \emptyset$}
                    \State $(k^*,j^*)$ using \eqref{eq:greedy_update} with $\epsilon=\epsilon_t$
        		\State $\calT^{(r+1)}=\calT^{(r)}\cup\{(k^*,j^*)\}$
                    \State $\calS =(\calS_0,\calS_1,\ldots,\calS_{n-1})= \calS(\calT^{(r+1)})$
                    \If{ $\vert \calS_{k^*}\vert= s$}
        			\State $\calV^{(r+1)}= \calV^{(r)}\setminus \{(k^*,j), \; j=1,2,\ldots,m\}$
           \Else
           \State $\calV^{(r+1)}= \calV^{(r)}\setminus \{(k^*,j^*)\}$
        			\EndIf
        		\State $r \leftarrow r+1$
    		\EndWhile
    		\State $t \leftarrow t+1$, \;$\epsilon_{t+1} = \epsilon_t/c$
		\doWhile{$\rank{\matW_\calS}<n$}
        %$\trace{(\matW_\calS + \epsilon_{t} \matI)^{-1}} {>} 1/\epsilon_t$
		\Ensure Actuator schedule $\calS$
	\end{algorithmic}
	\label{alg:greedy_algo_epsilon}
\end{algorithm}

The intuition behind the greedy algorithm is as follows. For a given value of $\epsilon$, let $\calS^{(r)}$ be the actuator schedule obtained in the $r$th iteration of the greedy algorithm. Suppose that $\rank{\matW_{\calS^{(r)}}}=R$, which implies it has $R$ nonzero eigenvalues $\{\lambda_i\}_{i=1}^{R}$. Then, for a small $\epsilon$, the objective function is
\begin{align}
	\trace{(\matW_{\calS^{(r)}} + \epsilon \matI)^{-1}} &= \sum_{i=1}^{R} \frac{1}{\lambda_i+\epsilon} + {\frac{n-R}{\epsilon}}. %\\
% &\approx \sum_{i=1}^{R} \cfrac{1}{\lambda_i} + \cfrac{n-R}{\epsilon}.\label{eq:approx_obj_fn}
\end{align}
% The term $\sum_i 1/\lambda_i$ in \eqref{eq:approx_obj_fn} represents the average energy required to drive the system from $\vecx(0)=\zero$ to the unit norm projection of $\vecx_f$ onto the range space of
% effective controllability matrix $\matW_{\calS^{(p)}}$.
The term $(n - R)/\epsilon$ acts as the penalty added to the objective function when $\matW_{\calS^{(r)}}$ is rank deficient. 
%Therefore, the objective function in \eqref{eq:opt_energy_epsilon} forces the greedy algorithm to improve the rank of $\matW_{\calS^{(t+1)}}$ in the next iteration while minimizing the energy associated with the control. 
Our next result shows that the greedy algorithm improves the rank of the resulting Grammian until the rank becomes $n$ under the availability of such actuators for sufficiently small $\epsilon$.
% \begin{lemma} \label{prop:req_cond_to_halt}
%     If in the first $n$ steps of the greedy procedure there exists an actuator that improves the rank then there exists sufficiently small $\eps_t$ for which the greedy algorithm finds an $s$-sparse schedule $\calS_t$ with \rank($\matW_{\calS_t}$) = $n$ in the first $n$ steps of the greedy procedure for some $t^{\text{th}}$ iteration of \Cref{alg:greedy_algo_epsilon}. {\change I did not follow this lemma}
% \end{lemma}
\begin{prop} \label{prop:req_cond_to_halt}
    %\textcolor{red}{Don't we need to assume that the system is $s$-sparse controllable for this result to hold? This is because Algorithm 1 finds a sparse actuator schedule, and such a schedule exists only if the system is sparse controllable. Simply being controllable is not enough to guarantee an $s$-sparse schedule for which the system will be controllable.} 
    For a given outer loop iteration index $t$ of \Cref{alg:greedy_algo_epsilon}, let the $r$th (inner) iteration  start with $\calT^{(r)}$ and its search space be $\calV^{(r)}$ be as defined in \eqref{eq:search_space}. Suppose that the following set,
    \begin{equation} \label{eq:rank_improvement}
        \lc v\in\calV^{(r)}: \rank{\matW_{\calS(\calT^{(r)}\cup\{v\})}}
        >\rank{\matW_{\calS(\calT^{(r)})}}\rc,
    \end{equation}
    is nonempty. Then, there exists $\epsilon^*>0$ such that  if $\epsilon_t<\epsilon^*$, the next iteration of the greedy algorithm satisfies
    \begin{equation}
        \rank{\matW_{\calS(\calT^{(r+1)})}}=\rank{\matW_{\calS(\calT^{(r)})}}+1.
    \end{equation} Here, the schedule $\calS(\cdot)$ is defined in \eqref{eq:S2Tmapping} and the resulting Gramian is defined in \eqref{eq:W_mat_Defn}.
    \end{prop}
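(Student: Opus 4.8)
The plan is to reduce the insertion of one index to a rank-one update of the Gramian, and then exploit the $\epsilon\to 0$ expansion of the trace-inverse cost (the displayed expansion in the paragraph preceding the proposition) to show that, for small enough $\epsilon$, every rank-increasing index strictly beats every non-rank-increasing one in the greedy score.

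First I would rewrite \eqref{eq:W_mat_Defn} via the substitution $\tau=n-k$ and expand $\matB_{\calS_\tau}\matB_{\calS_\tau}\tran=\sum_{j\in\calS_\tau}\vecb_j\vecb_j\tran$, where $\vecb_j$ is the $j$th column of $\matB$, to obtain
\begin{equation*}
\matW_{\calS(\calT)}=\sum_{(k,j)\in\calT}\vecg_{(k,j)}\vecg_{(k,j)}\tran,\qquad \vecg_{(k,j)}:=\matA^{\,n-1-k}\vecb_j.
\end{equation*}
Adding a single element $v=(k,j)$ therefore gives $\matW_{\calS(\calT^{(r)}\cup\{v\})}=\matW_{\calS(\calT^{(r)})}+\vecg_v\vecg_v\tran$, a rank-one perturbation of a symmetric positive semidefinite matrix. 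Since $\matW_{\calS(\calT^{(r)})}\preceq\matW_{\calS(\calT^{(r)})}+\vecg_v\vecg_v\tran$ with both matrices PSD, the rank either stays equal to $R:=\rank{\matW_{\calS(\calT^{(r)})}}$ or rises by exactly one, the latter occurring precisely when $\vecg_v$ lies outside the column space of $\matW_{\calS(\calT^{(r)})}$. By hypothesis the set \eqref{eq:rank_improvement} of such indices, which I denote $\calA\subseteq\calV^{(r)}$, is nonempty; let $\calB=\calV^{(r)}\setminus\calA$.

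Next, for each feasible $v$ I would write the greedy score $f_v(\epsilon):=\trace{(\matW_{\calS(\calT^{(r)}\cup\{v\})}+\epsilon\matI)^{-1}}$ in eigenvalue form, exactly as in the expansion preceding the proposition: with $\{\lambda^{(v)}_i\}_{i=1}^{R_v}$ the nonzero eigenvalues of $\matW_{\calS(\calT^{(r)}\cup\{v\})}$ and $R_v\in\{R,R+1\}$,
\begin{equation*}
f_v(\epsilon)=\sum_{i=1}^{R_v}\frac{1}{\lambda^{(v)}_i+\epsilon}+\frac{n-R_v}{\epsilon}.
\end{equation*}
For $v\in\calA$ the divergent term is $(n-R-1)/\epsilon$, whereas for $v\in\calB$ it is $(n-R)/\epsilon$, so the two leading terms differ by $1/\epsilon$. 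Because $\calV^{(r)}$ is finite and every nonzero eigenvalue appearing above is strictly positive, the finite sums are bounded above, uniformly in $v$ and in $\epsilon\le 1$, by a single constant $M$. Hence for $\epsilon_t\le 1$,
\begin{equation*}
\min_{v\in\calB}f_v(\epsilon_t)-\min_{v\in\calA}f_v(\epsilon_t)\ \ge\ \frac{1}{\epsilon_t}-M,
\end{equation*}
which is strictly positive whenever $\epsilon_t<\epsilon^*:=\min\{1,1/M\}$ (the case $\calB=\emptyset$ being trivial). Thus for $\epsilon_t<\epsilon^*$ the minimizer in \eqref{eq:greedy_update} belongs to $\calA$, and by the rank-one argument the selected index increases the rank by exactly one, which is the claim.

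The step I expect to be the main obstacle is the uniform bound in the last paragraph: I must ensure a single threshold $\epsilon^*$ separates the two families of scores simultaneously for all competing indices, rather than merely pointwise for each $v$. This is exactly where the finiteness of the search space $\calV^{(r)}$ and the strict positivity of the surviving eigenvalues $\lambda^{(v)}_i$ are indispensable, since they convert the per-index divergence into a gap bounded below by a fixed function of $\epsilon_t$. By comparison, the rank-one reduction and the eigenvalue expansion are routine once the dyadic structure of \eqref{eq:W_mat_Defn} is made explicit.
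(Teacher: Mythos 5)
Your proposal is correct and follows essentially the same route as the paper's proof: both arguments hinge on comparing the divergent penalty terms $(n-R)/\epsilon$ versus $(n-R-1)/\epsilon$ in the eigenvalue expansion of the trace-inverse score, so that for sufficiently small $\epsilon$ every rank-increasing candidate beats every non-rank-increasing one. Your explicit rank-one decomposition of the Gramian update and the uniform constant $M$ over the finite search space are minor refinements of the paper's per-pair threshold \eqref{eq:eps_ineq}, not a different argument.
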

\begin{proof}
    See \Cref{app:req_halt_pf}.
\end{proof}
The above result shows that the greedy algorithm favors adding an element that increases the rank of the Gramian over an element that only increases the eigenvalues of $\matW_{\calS^{(r+1)}}$ without concurrently increasing its rank. \textcolor{black}{The inner iteration of \Cref{alg:greedy_algo_epsilon} performs a greedy optimization of \eqref{eq:opt_energy_epsilon_mod}. When the stopping criterion of \Cref{alg:greedy_algo_epsilon} is not satisfied, $\eps$ is reduced by a factor of $c$. From \Cref{prop:req_cond_to_halt}, reducing $\eps$ in the outer iteration eventually forces the algorithm to choose an actuator that improves the rank in the next inner iteration. Thus, when $\epsilon$ becomes small enough, the approach greedily chooses actuators to minimize the average control energy while improving controllability.}

Next, in order to characterize the near-optimality of \Cref{alg:greedy_algo_epsilon}, we introduce the notions of supermodularity and matroid constraints: %It is known that the above greedy strategy produces a near-optimal solution if the cost function is supermodular and the constraint set is a matroid, as defined below:
\begin{defn}[Supermodularity]\label{defn:submodular}
The set function $f : \calU \to \mathbb{R}$ is said to be $\alpha$-submodular if $\alpha \in \mathbb R_{+}$ is the largest number for which the following holds $\forall \calA \subseteq \calB \subseteq \calU$ and $\forall e \in \calU \setminus \calB$,
\begin{equation}
f(\calA \cup \{e\} ) - f(\calA) \geq {\alpha [f(\calB \cup \{e\} ) - f(\calB)]}.
\end{equation}
Also, the function $-f(\cdot)$ is said to be $\alpha$-supermodular. %If $\alpha=1$, the function $f(\cdot)$ is called submodular.
\end{defn}
\begin{defn}[Matroid]\label{defn:matroid}
A matroid is a pair $(\calU, \calI)$ where $\calU$ is a finite set and $\calI \subseteq 2^\calU$ satisfies the following three properties:
(i) $\emptyset \in \calI$; (ii) For any two sets, $\calA \subseteq \calB \subseteq \calU$, if $\calB \in \calI$, then $\calA \in \calI$; (iii) For any two sets, $\calA, \calB \in \calI$, if $\vert \calB \vert > \vert \calA \vert$, then there exists $e \in \calB \setminus \calA$ such that $\calA \cup \{e\} \in \calI$.
\end{defn}

Our next result establishes that the cost function of the optimization problem in \eqref{eq:opt_energy_epsilon_mod} is supermodular and that its constraint set is a matroid.
\begin{prop} \label{prop:submodular_proof}
The objective function of \eqref{eq:opt_energy_epsilon_mod}, $E(\calT,\epsilon) \triangleq \trace { (\matW_{\calS(\calT)} + \epsilon \matI)^{-1} }$ is an $\alpha$-supermodular function with $\alpha$ satisfying 
%\textcolor{red}{The proof shows it is supermodular, not submodular. The same confusion is in Theorem 1: the statement says supermodular but the proof says submodular.}
 \begin{equation}
 \alpha(\epsilon) \geq \cfrac{\epsilon}{\lambda_{\max}(\epsilon \matI + \matW )},\label{eq:submodular_proof}
 \end{equation}
 where $\matW \triangleq \sum_{k=1}^{n} \matA^{k-1} \matB \matB^{\T} (\matA^{k-1})^{\T}$ and $\lambda_{\max}(\cdot)$ is the largest magnitude eigenvalue. Also, \textcolor{black}{($\calV$,$\lc\calT:\;\calS(\calT)\in\Phi \rc$) is a matroid.} %Also, the constraint set $\lc\calT:\;\calS(\calT)\in\Phi \rc$ in \eqref{eq:opt_energy_epsilon_mod} is a matroid $(\calU, \calI)$ with $\calU = \calS(\calT)$, $\calT \in 2^\calV$ and $\calI = \Phi$. \textcolor{red}{What is $\matW$? Is what I have written correct? Also, from the definition above, you need two quantities, $\mathcal{U}$ and $\mathcal{I}$, in order to define a matriod. What does it mean to say that a single set, the constraint set, is a matroid? See if you agree with my modification above.}
% \textcolor{red}{This does not make sense. In the definition of a matroid, $\calI \subset 2^\calU$. This is not satisfied by the sets you are writing in blue.}
\end{prop}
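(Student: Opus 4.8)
The plan is to split the proposition into its two independent claims and treat them separately. For the supermodularity bound I would first perform the reduction that makes everything a rank-one update. Re-indexing \eqref{eq:W_mat_Defn} by $\ell=n-k$ and using the bijection \eqref{eq:S2Tmapping}, I would write
\[
\matW_{\calS(\calT)}=\sum_{(k,j)\in\calT}\vecv_{k,j}\vecv_{k,j}^{\T},\qquad \vecv_{k,j}\triangleq\matA^{n-1-k}\vecb_j,
\]
where $\vecb_j$ is the $j$th column of $\matB$. Setting $\matM_\calT\triangleq\matW_{\calS(\calT)}+\epsilon\matI\succeq\epsilon\matI$, adding an element $v=(k,j)$ gives the rank-one update $\matM_{\calT\cup\{v\}}=\matM_\calT+\vecv_v\vecv_v^{\T}$, and the Sherman--Morrison identity yields the marginal decrease
\[
\Delta(v\mid\calT)\triangleq E(\calT,\epsilon)-E(\calT\cup\{v\},\epsilon)=\frac{\vecv_v^{\T}\matM_\calT^{-2}\vecv_v}{1+\vecv_v^{\T}\matM_\calT^{-1}\vecv_v}\ge0 .
\]
By \Cref{defn:submodular}, establishing $\alpha$-supermodularity of $E$ then reduces to proving $\Delta(v\mid\calA)\ge\alpha\,\Delta(v\mid\calB)$ for all $\calA\subseteq\calB$ and $v\notin\calB$.

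Next I would bound the two marginal gains over the common eigenvalue window. Since $\matW_{\calS(\calT)}=\sum_{v\in\calT}\vecv_v\vecv_v^{\T}\preceq\sum_{v\in\calV}\vecv_v\vecv_v^{\T}=\matW$, every $\matM_\calT$ satisfies $\epsilon\matI\preceq\matM_\calT\preceq\epsilon\matI+\matW$, so with $L\triangleq\lambda_{\max}(\epsilon\matI+\matW)$ the Rayleigh-quotient estimates give, for any such $\matM$,
\[
\tfrac{1}{L}\,\vecv^{\T}\matM^{-1}\vecv\;\le\;\vecv^{\T}\matM^{-2}\vecv\;\le\;\tfrac{1}{\epsilon}\,\vecv^{\T}\matM^{-1}\vecv .
\]
Writing $p\triangleq\vecv_v^{\T}\matM_\calA^{-1}\vecv_v$ and $q\triangleq\vecv_v^{\T}\matM_\calB^{-1}\vecv_v$, this yields $\Delta(v\mid\calA)\ge\tfrac1L\tfrac{p}{1+p}$ and $\Delta(v\mid\calB)\le\tfrac1\epsilon\tfrac{q}{1+q}$. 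Because $\calA\subseteq\calB$ forces $\matM_\calA\preceq\matM_\calB$ and hence $p\ge q\ge0$, the factor $\tfrac{p(1+q)}{q(1+p)}\ge1$, and dividing the two estimates gives $\Delta(v\mid\calA)\ge(\epsilon/L)\,\Delta(v\mid\calB)$, which is exactly \eqref{eq:submodular_proof}. I expect the main obstacle to be precisely this ratio bound: since $t\mapsto t^{-2}$ is not operator monotone one \emph{cannot} directly compare $\matM_\calA^{-2}$ with $\matM_\calB^{-2}$, so the numerator and denominator must be bounded separately through the spectral window $[\epsilon,L]$ rather than by a matrix inequality on the inverse-square.

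For the matroid claim I would recognize $(\calV,\{\calT:\calS(\calT)\in\Phi\})$ as a \emph{partition matroid}. Partition $\calV$ into the $n$ blocks $\calV_k=\{(k,j):j=1,\ldots,m\}$, one per time index. By \eqref{eq:S2Tmapping} and \eqref{eq:feasible_set}, $\calS(\calT)\in\Phi$ holds iff $|\calT\cap\calV_k|=|\calS_k|\le s$ for every $k$, so $\calI=\{\calT:|\calT\cap\calV_k|\le s\ \forall k\}$. Verifying the axioms of \Cref{defn:matroid} is then routine and forms the final step: $\emptyset\in\calI$ trivially; if $\calB\in\calI$ and $\calA\subseteq\calB$ then $|\calA\cap\calV_k|\le|\calB\cap\calV_k|\le s$, so $\calA\in\calI$; and for $\calA,\calB\in\calI$ with $|\calB|>|\calA|$, a counting argument forces some block with $|\calB\cap\calV_k|>|\calA\cap\calV_k|$, whence any $e\in(\calB\setminus\calA)\cap\calV_k$ satisfies $|(\calA\cup\{e\})\cap\calV_k|\le s$ and thus $\calA\cup\{e\}\in\calI$. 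This establishes both halves of the proposition.
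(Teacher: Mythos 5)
Your proposal is correct, and on the supermodularity half it takes a genuinely different route from the paper. The paper's proof is a one-line invocation of \cite[Theorem 2]{Chamon_2017_MSESupermodular}, which asserts $\alpha$-supermodularity of $\calA \mapsto \trace{\lb\matM_\emptyset + \sum_{i\in\calA}\matM_i\rb^{-1}}$ with $\alpha \geq \mu_{\min}/\mu_{\max}$; all the paper does is identify $\matM_\emptyset=\epsilon\matI$, $\mu_{\min}=\epsilon$ and $\mu_{\max}=\lambda_{\max}(\epsilon\matI+\matW)$. You instead reprove the needed special case from first principles: the observation that each element $(k,j)$ contributes the rank-one term $\vecv_{k,j}\vecv_{k,j}^{\T}$ lets you compute the marginal decrease in closed form via Sherman--Morrison, and the two-sided bound $\frac{1}{L}\matM^{-1}\preceq\matM^{-2}\preceq\frac{1}{\epsilon}\matM^{-1}$ (legitimate because $\matM$ commutes with its own powers and its spectrum lies in $[\epsilon,L]$), combined with $p\geq q$ and the monotonicity of $t\mapsto t/(1+t)$, yields exactly the constant $\epsilon/\lambda_{\max}(\epsilon\matI+\matW)$. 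This is essentially the argument hidden inside the cited theorem, specialized to rank-one updates; what it buys is a self-contained proof with an explicit expression for the marginal gains, and your remark that one cannot compare $\matM_\calA^{-2}$ with $\matM_\calB^{-2}$ directly (since $t\mapsto t^{-2}$ is not operator monotone) correctly identifies why the numerator and denominator must be bounded separately. The only point worth making explicit is the degenerate case $q=0$: then $\vecv_v=\zero$, both marginals vanish, and the inequality holds trivially. The matroid half is essentially the paper's own argument---you additionally name the structure as a partition matroid with blocks $\calV_k$, which is a cleaner framing, but the axiom verification, including the pigeonhole step that produces a block where $\calB$ exceeds $\calA$, is the same.
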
 
\begin{proof}
See \Cref{app:submodular_proof}.
\end{proof}

In \eqref{eq:submodular_proof}, we explicitly retain the dependence of $\alpha$ on $\eps$, as the value of $\eps$ varies across the iterations of \Cref{alg:greedy_algo_epsilon}. From \Cref{prop:submodular_proof}, we have the following guarantee for the inner loop of  \Cref{alg:greedy_algo_epsilon}.
\begin{theorem}
\label{thm:Act_sch_guarantee}
    Let $(\matA\in\bbR^{n\times m},\matB\in\bbR^{n\times m})$ be an $s$-sparse controllable LDS. Let $\beta(\epsilon) = \min \lb \frac{\alpha(\epsilon)}{2}, \frac{\alpha(\epsilon)}{1+\alpha(\epsilon)} \rb$ 
    %\textcolor{red}{(do you need $\epsilon_t$ here? Can we define it with $\epsilon$? We are saying $\epsilon=\epsilon_t$ later in this sentence.)} 
    where $\alpha(\cdot)$ is the submodularity constant of the cost function of \eqref{eq:opt_energy_epsilon_mod}. 
    %with $\epsilon=\epsilon_t$. 
    The \textcolor{black}{cost function corresponding to the set $\calS$ returned by the inner loop of \Cref{alg:greedy_algo_epsilon} run with a given value of $\epsilon$ satisfies} %$T$ is bounded by
% \begin{equation}\label{eq:Act_sch_guarantee_1}
% T \hspace{-0.75mm} \leq \hspace{-1.5mm} \left\lceil \hspace{-0.5mm} \cfrac{\log{\epsilon_0\ls n\Tilde{E}(\epsilon_{T-2}) + \beta(\epsilon_{T-2})\ls E^* - n\Tilde{E}(\epsilon_{T-2}) \rs \rs}}{\log{c}} \right\rceil \hspace{-0.5mm}+\hspace{-0.5mm}1
% \end{equation}
 % Here, $c > 1$ and $\epsilon_0>0$ are the parameters of the algorithm.
\begin{equation}\label{eq:Act_sch_guarantee_2}
\trace{\hspace{-0.5mm}(\matW_\calS + \epsilon \matI)^{-1}\hspace{-0.5mm}} < \ls 1-\beta(\epsilon) \rs \frac{n}{\epsilon} + \beta(\epsilon) E^*,
\end{equation} 
where $E^*$ is the optimal value of the objective function of the optimization problem in \eqref{eq:opt_energy_epsilon_mod} with $\epsilon=0$. %Also, when $\eps < \eps^*$ as described in \Cref{prop:req_cond_to_halt} we have $\rank{\matW_\calS} \geq \frac{n}{2}$.  %, where $\alpha$ is the submodularity constant of function $E$ given in \Cref{prop:submodular_proof} when $\epsilon=\epsilon_T$.
%{\change We can remove the last part if the stopping criteria is changed.} \textcolor{red}{I have difficulty understanding what the Theorem is saying, and what you mean by changing the stopping criterion.}
\end{theorem}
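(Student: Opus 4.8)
The plan is to combine the two structural facts established in \Cref{prop:submodular_proof}—that $E(\calT,\epsilon)$ is $\alpha(\epsilon)$-supermodular and that the feasibility constraint $\lc\calT:\calS(\calT)\in\Phi\rc$ forms a matroid—with a known approximation guarantee for greedy minimization of supermodular functions subject to a matroid constraint. The inner loop of \Cref{alg:greedy_algo_epsilon} is precisely a greedy algorithm that, at each step, adds the feasible element minimizing the cost $E(\cdot,\epsilon)$, and terminates when the search space $\calV^{(r)}$ is exhausted, i.e., when the schedule is maximal in the matroid. So the theorem should follow by invoking the appropriate off-the-shelf bound and then specializing it to our cost function.

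\emph{First} I would recall the standard result (e.g., from the literature on weak/approximate submodularity, such as the analysis underlying the bounds of Olshevsky and collaborators) that for minimizing an $\alpha$-supermodular set function over a matroid via the greedy rule, the returned set $\calS$ satisfies a bound of the form $g(\calS)-g(\emptyset) \ge \beta(\epsilon)\,[g(\calS^{\mathrm{opt}})-g(\emptyset)]$ with $\beta(\epsilon)=\min\lb \tfrac{\alpha}{2},\tfrac{\alpha}{1+\alpha}\rb$, where $g$ is the supermodular objective to be maximized. \emph{Second}, I would translate this into our minimization setting: since $-E(\cdot,\epsilon)$ is $\alpha$-supermodular, set $g(\calT)=E(\emptyset,\epsilon)-E(\calT,\epsilon)$, which is nonnegative, normalized ($g(\emptyset)=0$), and $\alpha$-supermodular in the sense of \Cref{defn:submodular}. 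Applying the guarantee to $g$ gives $E(\emptyset,\epsilon)-E(\calS,\epsilon) \ge \beta(\epsilon)\,[E(\emptyset,\epsilon)-E(\calS^{\mathrm{opt}},\epsilon)]$, and rearranging yields $E(\calS,\epsilon) \le (1-\beta(\epsilon))E(\emptyset,\epsilon) + \beta(\epsilon)E(\calS^{\mathrm{opt}},\epsilon)$.

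\emph{Third}, I would evaluate the two terms on the right-hand side explicitly. The empty schedule gives $\matW_{\calS(\emptyset)}=\zero$, so $E(\emptyset,\epsilon)=\trace{(\epsilon\matI)^{-1}}=n/\epsilon$, which produces the $(1-\beta(\epsilon))\tfrac{n}{\epsilon}$ term. For the optimal term, I would argue that $E(\calS^{\mathrm{opt}},\epsilon)\le E^*$, where $E^*$ is the optimal value of \eqref{eq:opt_energy_epsilon_mod} at $\epsilon=0$: because $\matW_\calS+\epsilon\matI \succeq \matW_\calS$ implies $\trace{(\matW_\calS+\epsilon\matI)^{-1}}\le\trace{\matW_\calS^{-1}}$ for full-rank $\matW_\calS$, and $s$-sparse controllability guarantees the existence of a feasible full-rank schedule, the minimizer at $\epsilon>0$ cannot cost more than the best $\epsilon=0$ value. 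Substituting both yields the claimed strict inequality \eqref{eq:Act_sch_guarantee_2}.

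\emph{The main obstacle} I anticipate is matching the greedy-over-matroid guarantee cleanly to our setting. The classical results for approximate supermodular minimization under matroid constraints are typically stated for maximizing a normalized nonnegative supermodular function, and the precise constant $\beta(\epsilon)=\min\lb\tfrac{\alpha}{2},\tfrac{\alpha}{1+\alpha}\rb$ must be traced to a specific theorem whose hypotheses (monotonicity, normalization, the $\alpha$-supermodularity definition in \Cref{defn:submodular}) exactly align with what we have proven. Care is needed because the greedy procedure here stops at a \emph{maximal} feasible set (a basis of the matroid) rather than at a fixed cardinality, so I would verify that the cited bound applies to the basis returned by the matroid-greedy and that the comparison set $\calS^{\mathrm{opt}}$ in the bound is itself a feasible basis, ensuring the $E^*$ comparison is legitimate. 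The transition from $E(\calS^{\mathrm{opt}},\epsilon)$ to the $\epsilon=0$ optimum $E^*$ also deserves a careful monotonicity argument to justify the strictness of the final inequality.
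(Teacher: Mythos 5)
Your proposal is correct and follows essentially the same route as the paper's proof: invoke the greedy guarantee for $\alpha$-supermodular minimization over a matroid (the paper cites \cite[Theorem 1]{Chamon_2019_MatroidOpt}, which gives exactly the constant $\beta(\epsilon)=\min\lb\tfrac{\alpha}{2},\tfrac{\alpha}{1+\alpha}\rb$), evaluate $E(\emptyset,\epsilon)=n/\epsilon$, and pass from $E^*(\epsilon)$ to $E^*$ via the strict monotonicity $\trace{(\matW_{\calS^*}+\epsilon\matI)^{-1}}<\trace{\matW_{\calS^*}^{-1}}$. The steps you flag as needing care (matching the hypotheses of the cited bound and justifying strictness) are handled in the paper exactly as you outline.
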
 
\begin{proof}
See \Cref{app:act_schd_pf}.
\end{proof}

%We note that if the LDS is not sparse controllable, $E^*$ is not finite, and $\lambda^*$ does not exist. Consequently, we cannot guarantee that there exists an $\epsilon$ for which the stopping criteria in Step 17 of \Cref{alg:greedy_algo_epsilon} is met, and the algorithm may not stop after finite iterations. %\textcolor{blue}{ The effect of $c$ on the suboptimality can be observed by using \eqref{eq:Act_sch_guarantee_1} in the first term of \eqref{eq:Act_sch_guarantee_2} as $\eps_{T-1} = \eps_0/c^{T-1}$. We note that as $c$ approaches $1$, the suboptimality gap decreases between $\epsilon$-auxiliary energy and the optimal energy $E^*$. So, we choose $c$ close to 1, but it can lead to more iterations because the value of $\epsilon_t$ changes slowly with $t$}.

\Cref{thm:Act_sch_guarantee} shows that the cost function evaluated at the solution returned by \Cref{alg:greedy_algo_epsilon} for a given value of $\eps$ is an additive term away from the optimal cost. Admittedly, the $n/\eps$ dependence in the additive term on the right hand side makes the bound loose. In practice, \Cref{alg:greedy_algo_epsilon} returns a schedule that far outperforms, for example, a random schedule. However, besides the above result, it is hard to quantify the optimality gap even numerically,  due to the combinatorial nature of the problem. 

%\textcolor{red}{Why did we present the above theorem? Provides an upper bound on the optimality gap. It is loose, but in practice, \textcolor{blue}{the \Cref{alg:greedy_algo_epsilon}} the greedy algo returns a schedule that is close to the optimal one.} 

%\textcolor{blue}{There might exist some pathological cases where the \Cref{alg:greedy_algo_epsilon} may not terminate. As the sparsity $s$ increases from $n-\rank{\matA}$ \Cref{alg:greedy_algo_epsilon} terminates in finite iterations in practice. % ; this follows from \Cref{lma:controllabl_constr_sat}} 
When \Cref{alg:greedy_algo_epsilon} terminates, the resulting actuator schedule $\calS$ satisfies $\rank{\matW_\calS} = n$.  Note that, due to the greedy nature of the algorithm, $\calV^{(r)}$ could become an empty set before $\matW_{\calS}$ attains full rank, and the algorithm may not terminate. However, we have never found this to be an issue in any of our experiments. %The algorithm always terminates and returns a schedule $\calS$ such that $\rank{\matW_\calS} = n$.
%\textcolor{red}{Elaborate on this.} 

Finally, we note that our approach can also be used to find a time-invariant sparse actuator schedule. For this, we change $\Phi$, the feasible set of \eqref{eq:opt_energy_epsilon}, defined in \eqref{eq:feasible_set}, to $\lc(\calS_0,\calS_0,\ldots,\calS_{0})\!\!:\calS_0\!\subseteq \!\{1,2,\ldots,m\},\;\vert \calS_0 \vert \leq s\rc$, i.e., $\calS_k=\calS_0$. We can drop the index $t$ to replace $\calV$ with $\{1,2,\ldots,m\}$ and define the schedule obtained from $\calT$ as $(\calT,\calT,\ldots,\calT)$ instead of \eqref{eq:S2Tmapping} (Step 6 of \Cref{alg:greedy_algo_epsilon}). Also, in this case, $\calV^{(r)}=\{1,2,\ldots,m\}\setminus\calT^{(r)}$. Similar to \Cref{prop:submodular_proof}, we can prove that this new problem optimizes an $\alpha$-supermodular function subject to a matroid constraint, and like \Cref{thm:Act_sch_guarantee}, the modified greedy strategy possesses similar guarantees. We omit the details to avoid repetition.

\begin{figure}[t]
\minipage{0.48\textwidth}
	\includegraphics[width =1\linewidth]{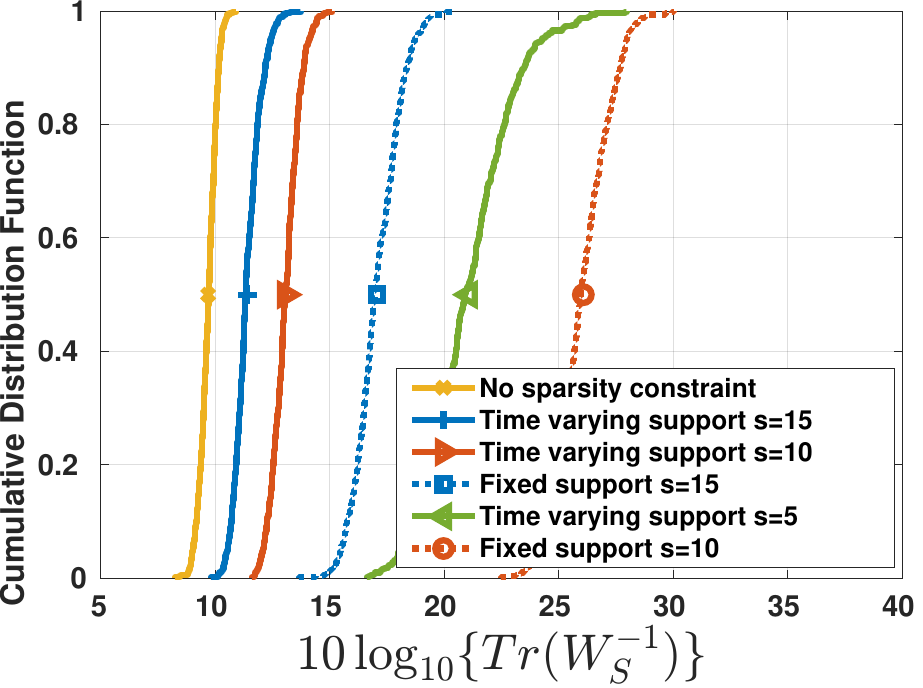}
	\caption{CDF of $10\log_{10} \lb\trace{\lb\matW_\calS \rb^{-1}}\rb$ for varying sparsity levels with $m=n=20$.} 
	\label{fig:TrCDF}
\endminipage\hfill
\end{figure}

\begin{figure}[t]
\minipage{0.48\textwidth}
	\includegraphics[width = 1\linewidth]{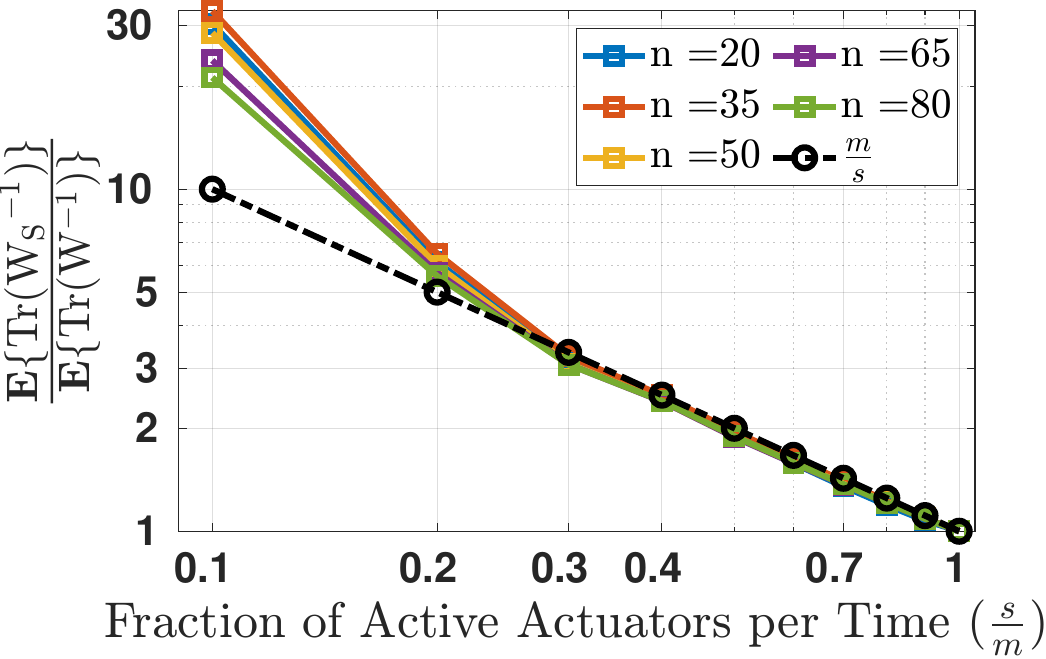}
	\caption{$\frac{\bbE_{A,B} \trace{\matW_\calS^{-1}}}{\bbE_{A,B} \trace{\matW^{-1}}} $ vs. the fraction of  active actuators ($\frac{s}{m}$),  averaged over 100 independent trials with $m=50$.} 
	\label{fig:InverseEnergy}
\endminipage\hfill
\end{figure}

\section{Simulation Results}
\label{sec:simulations}

In this section, we illustrate the performance of the time-varying and sparse actuator scheduling algorithm developed in this work by applying it to an LDS whose transfer matrix $\matA$ is generated from an Erd\H{o}s-Renyi random graph. An Erd\H{o}s-Renyi random graph consists of $n$ vertices. The pairs of vertices are independently connected by an edge with probability $p = \frac{2 \log n}{n}$, a setting used to model real-time networked systems~\cite{Tzoumas_2016_CtrlEffort}. The transfer matrix of the LDS is then generated from the graph as $\matA = \matI - \frac{1}{n} \matL$, where $\matL$ is the graph Laplacian.

We first illustrate the dependence of the average control energy, $\trace{\lb\matW_\calS \rb^{-1}}$, on the sparsity level $s$ of the control inputs. To this end, we generate the transfer matrices $\matA$ from $500$ Erd\H{o}s-Renyi random graphs drawn independently. We use \Cref{alg:greedy_algo_epsilon} to determine the sparse actuator schedule, and compute  $\trace{\lb\matW_\calS \rb^{-1}}$ corresponding to the schedule $\calS$ returned by the algorithm. For this experiment, we set $n=20$, $m=n$, and $\matB = \matI$. In \Cref{fig:TrCDF}, we show the empirical cumulative distribution function (CDF) of $\trace{\lb\matW_\calS \rb^{-1}}$ in two settings: the time-varying support case with sparsity levels $s=5, 10, 15$, and the fixed support case with $s=10, 15$. We also plot the CDF for the case with no sparsity constraint. The curves shift to the right as $s$ is decreased as expected since the system becomes more constrained, thereby incurs a higher average control energy. 
Similarly, using a fixed support makes the system more constrained and hence requires higher average control energy compared to the time-varying support case. 
Further, the control energy with time-varying support and $s=15, s=10$ is close to the control energy without sparsity constraints. 
Therefore, time-varying sparsity constraints do not impose a significant energy burden to control the LDS at moderate sparsity levels; we elaborate on this in our next experiment. 
%However, as mentioned in \Cref{sec:intro}, sparse inputs capture the natural constraints in several applications, and in particular, for networked control systems, they  save bandwidth due to their compressibility.

{
Next, we illustrate the relative cost of imposing the sparsity constraints compared to the non-sparse case. Here, the entries of the input matrix $\matB$ are i.i.d. and uniformly distributed over $[0,1]$. We plot $\rho \triangleq {\bbE_{A,B} \trace{\matW_\calS^{-1}}}/{\bbE_{A,B} \trace{\matW^{-1}}}$ as a function of the fraction of active actuators at each time instant (${s}/{m}$). The quantity $\rho$ represents the ratio of the average control energy with the sparsity constraint to the average control energy in the unconstrained case. We show the behavior in \Cref{fig:InverseEnergy}, with $m$ fixed at $50$ and for various values of $n$ from $20$ to $80$. 
%we investigate the effect of choosing various sparsity levels ($s$) on the $\trace{\matW_\calS^{-1}}$ for different systems (varying $n$) with fixed $m$. In our setup, we fix $m = 50$ and vary $n$ from $20$ to $80$. 
%is a random matrix \textcolor{red}{(what is a uniform random matrix? This is new terminology to me.)} with i.i.d. entries \textcolor{red}{drawn from \textcolor{blue}{uniform} distribution}. 
We observe that there is nearly an inverse-linear relationship between $s/m$ and $\rho$. This shows that using time-varying $s$-sparse control inputs increases the average energy to control the system by a factor proportional to the reciprocal of the fraction of active actuators. We have also observed this trend when $\matB$ is a random matrix with i.i.d. entries drawn from a Gaussian distribution and when $\matB$ = $\matI$ (for $n=m$). With reasonable approximations, it is possible to mathematically derive that this behavior holds when $\matB$ has i.i.d. entries; we omit the details due to lack of space.
%A rough mathematical justification is provided in \Cref{app:fig:InverseEnergy_pf}.}

\begin{figure}[t]
\minipage{0.48\textwidth}
	\includegraphics[width = 1\linewidth]{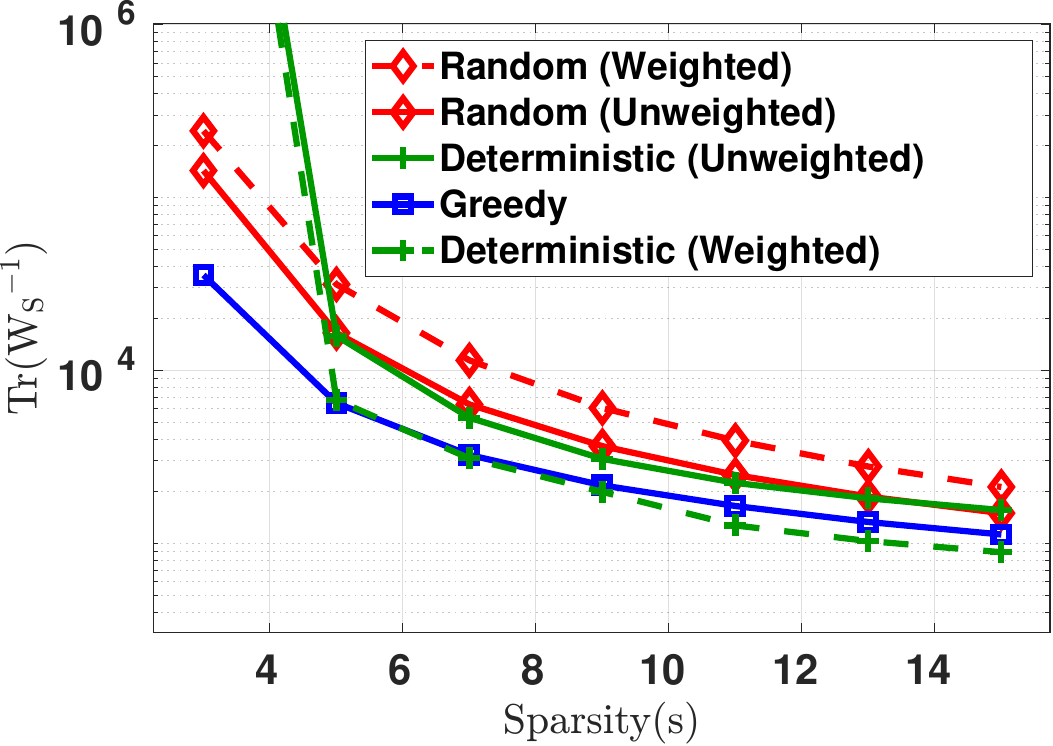}
	\caption{$\trace{\matW_\calS^{-1}}$ as a function of sparsity ($s$), averaged over $100$ independent trials with $n=100, m=50$.} 
	\label{fig:ActComparision}
\endminipage\hfill
\end{figure}

Next, we compare \Cref{alg:greedy_algo_epsilon} with schedulers that are adapted from~\cite{jadbabaie2018deterministic}, the closest to our work. The algorithms in \cite{jadbabaie2018deterministic} are developed under an \emph{average} sparsity constraint, i.e., the individual inputs need not be sparse, but the overall sequence of inputs need to satisfy an average sparsity constraint. We modify the algorithms in \cite{jadbabaie2018deterministic} by adding a piecewise sparsity constraint. In \Cref{fig:ActComparision}, we plot  $\trace{\matW_{\calS}^{-1}}$ as a function of $s$ for four algorithms adapted from \cite{jadbabaie2018deterministic}, namely, random-weighted, random-unweighted, deterministic-unweighted, and deterministic-weighted. \textcolor{black}{The random scheduler samples an actuator from the probability distribution given in \cite[Algorithm 6]{jadbabaie2018deterministic} and adds it to the schedule provided adding it still satisfies the sparsity constraint. The deterministic scheduler picks a actuator that greedily optimizes the objective function used in \cite[Algorithm 1]{jadbabaie2018deterministic}. The weighted schedulers have an additional weight (amplification) associated with the selected actuator, while the weights are set to $1$ for the unweighted schedule (see \cite[Algorithms 2, 6, 7]{jadbabaie2018deterministic}.) 
%Specifically, we implemented Algorithms 2, 6, 7 from~\cite{jadbabaie2018deterministic}. 
For the unweighted random scheduler, the actuators are sampled without replacement. The plot shows that \Cref{alg:greedy_algo_epsilon} generally outperforms existing algorithms modified to satisfy our sparsity constraint, especially at lower sparsity levels.} 
%At higher sparsity levels ($s > 10$), the deterministic-weighted schedule slightly outperforms \Cref{alg:greedy_algo_epsilon}.}

%\textcolor{red}{What iterations are you referring to here? If it is because Algorithm 2 of [4] is iterative, then explicitly mention which algorithm you are referring to. Overall, it is very confusing. Also, why do I need this information, i.e., what is the reason you are mentioning the number of iterations of the algorithm?} In low sparsity ($s$) regime, \textcolor{red}{Again, why do I (as a reader) need this information?} we see that proposed greedy algorithm has lower average energy than the unweighted scheduling algorithms, and perform similar to the weighted deterministic actuator scheduler adapted from Algorithm 2 of \cite{jadbabaie2018deterministic}. The unweighted random sampling has better performance than the random weighted schedule because the actuators are chosen without replacement, resulting in larger support size than the weighted schedule. The performance of all the actuator schedulers are similar at higher sparsity ($s$). The weighted deterministic schedule has slightly better performance for different $m$ relative to the proposed greedy algorithm. \textcolor{red}{Need to significantly improve the discussion here. It reads like a set of scattered comments that is hard to make sense out of.}}

\section{Conclusion}
We studied the problem of finding an energy-efficient actuator schedule for controlling an LDS in a finite number of steps, with at most $s$ active inputs per time step. We presented a greedy scheduling algorithm by considering the minimization of $\trace{\matW_\calS^{-1}}$, which models the average control energy. We presented several interesting theoretical guarantees for the algorithm. Through simulations, we showed that the algorithm returns a feasible schedule, with the average energy increasing (compared to the case without sparsity constraints) by a factor proportional to the inverse of the fraction of active inputs, for randomly drawn LDSs. Overall, we conclude that sparse inputs can drive an LDS to a desired state without a significant increase in energy requirements, at moderate sparsity levels.

\appendices
\crefalias{section}{appendix}

\section{Proof of \Cref{prop:req_cond_to_halt}}
\label{app:req_halt_pf}
Let $\calS$ be the actuator schedule at some iteration of the greedy algorithm such that $\rank{\matW_{\calS}}$ = $R < n$. By \eqref{eq:rank_improvement}, there exists a candidate feasible schedule $\hat{\calS}$ for the next iteration such that $\rank{\matW_{\hat{\calS}}}$ = $R+1$.
%Suppose there exists a feasible actuator such that rank is improved by adding it to the schedule $\calS$. Let there be 
Suppose there is an alternative feasible schedule $\Tilde{\calS}$ such that $\rank{\matW_{\Tilde{\calS}}}$ = $R$.  The greedy algorithm chooses the schedule $\hat{\calS}$ if \[ \trace{(\matW_{\Tilde{\calS}}+\eps\matI)^{-1}} > \trace{(\matW_{\hat{\calS}}+\eps\matI)^{-1}}. \]
Let $\lambda_i, i = 1, 2, \ldots n$, with $\lambda_1 \geq \lambda_2 \geq \ldots \geq \lambda_n$ denote the ordered eigenvalues of $\matW_\calS$ (similarly, $\hat{\lambda}_i$ and $\tilde{\lambda}_i$ \textcolor{black}{for $\matW_{\hat{S}}$ and $\matW_{\tilde{S}}$, respectively}). Then, we can write
\begin{gather}
    \trace{(\matW_{\Tilde{\calS}}+\eps\matI)^{-1}} \geq \frac{R}{\Tilde{\lambda}_1 + \eps} + \frac{n-R}{\eps}, \\
    \trace{(\matW_{\hat{\calS}}+\eps\matI)^{-1}} \leq \frac{R+1}{\hat{\lambda}_{R+1} + \eps} + \frac{n-R-1}{\eps}.
\end{gather}
The following inequality is a sufficient condition for the greedy algorithm to choose $\hat{\calS}$:
\begin{equation}
    \frac{R}{\Tilde{\lambda}_1 + \eps} + \frac{n-R}{\eps} > \frac{R+1}{\hat{\lambda}_{R+1} + \eps} + \frac{n-R-1}{\eps},
\end{equation}
Simplifying the above, we get
\begin{equation}
    \Tilde{\lambda}_1 \hat{\lambda}_{R+1} > \eps \ls R \Tilde{\lambda}_1 - (R+1) \hat{\lambda}_{R+1} \rs, 
\end{equation}
The inequality holds trivially for $R \Tilde{\lambda}_1 - (R+1) \hat{\lambda}_{R+1} \leq 0$.
    For $R \Tilde{\lambda}_1 - (R+1) \hat{\lambda}_{R+1} > 0$, we obtain
\begin{equation} \label{eq:eps_ineq}
    \eps < \frac{\Tilde{\lambda}_1 \hat{\lambda}_{R+1}}{R \Tilde{\lambda}_1 - (R+1) \hat{\lambda}_{R+1}}.
\end{equation} 
The right hand side of above equation is finite and positive, and $\eps$ is decreased by factor $c>1$ in each iteration of the \Cref{alg:greedy_algo_epsilon}. Hence, there will always exists an $\eps$ for which \eqref{eq:eps_ineq} is satisfied. Hence, for a sufficiently small $\eps$, \eqref{eq:eps_ineq} holds for any $\Tilde{\calS}$ given an $\hat{\calS}$. Thus, the greedy algorithm chooses an actuator that improves the rank. 
%\textcolor{red}{I am not convinced by the proof. I agree that if there are two schedules, of which improves the rank and the other doesn't, Algorithm 1 will pick the one that improves the rank provided $\epsilon$ is small enough. However, it is not clear that you will be able to make rank Ws = n eventually, because for an arbitrary initial choice $\calS$, it is not clear that there will be a candidate $\hat{\calS}$ that improves the rank.}
\hfill \qedb

\section{Proof of \Cref{prop:submodular_proof}}\label{app:submodular_proof}
From \cite[Theorem 2]{Chamon_2017_MSESupermodular}, we know that  
%which states that any set $\calV$, 
the real-valued function $f$ of the form %\textcolor{red}{the sentence is poorly constructed and does not make sense. Also the equation below introduces new notation that is hard to parse through.}
\begin{equation}
f({\calA}) = \trace{\lb\matM_{\emptyset} + \sum_{i \in \calA} \matM_i \rb^{-1} },
\end{equation}
where $\calA$ is an index set and  $\matM_{\emptyset} \succ 0 $ and $\matM_i \succeq 0$, is monotonically decreasing and $\alpha$-supermodular with $\alpha  \geq \frac{\mu_{\min}}{\mu_{\max}} > 0,$
where
\begin{equation}
0 < \mu_{\min} \leq \lambda_{\min}(\matM_{\emptyset}) \leq \lambda_{\max}(\matM_{\emptyset} + \sum_{i \in \calA} \matM_i) \leq \mu_{\max}.
\end{equation}
Here, $\lambda_{\max}(\cdot)$ and $\lambda_{\min}(\cdot)$ are the largest and smallest eigenvalues in magnitude, respectively. Our cost function $E(\calT, \epsilon)$ takes the same form as $f(\calA)$,  with $\matM_{\emptyset} = \epsilon \matI\succ 0$ and the matrices $\matM_i$ being constructed as the submatrices of the controllability gramian $\matW$ (see \eqref{eq:W_mat_Defn}). Further, from the definitions, we can set $\mu_{\min} = \epsilon$ and $\mu_{\max} = \lambda_{\max}(\epsilon \matI + \matW)$. Hence, the function $E(\calT, \epsilon)$ is $\alpha$-supermodular with $\alpha$ satisfying \eqref{eq:submodular_proof}. %\textcolor{red}{There is some confusion between ``submodular'' and ``supermodular'' in this proof. The result from [21] says $f$ is supermodular but in the end you are claiming that our cost function is submodular.}

We next complete the proof by showing that the sparsity constraint on the actuator set is a matroid constraint. For this proof, we consider the equivalent sparsity constraint on $\calT\in2^\calV$ in \eqref{eq:opt_energy_epsilon_mod}. The constraint is given by
\begin{equation}\label{eq:sparse_T}
\vert\{j: (k,j)\in\calT\}\vert\leq s, \;\text{for } k=0,1,\ldots,n-1
\end{equation}
To prove that the above constraint is a matroid, we verify the three conditions of \Cref{defn:matroid}. The first condition holds trivially. If we have a set $\calT$ that satisfies the sparsity constraint in \eqref{eq:sparse_T}, then any subset $\calT' \subseteq \calT$ also satisfies the sparsity constraint. Hence, the second condition holds.
For the third condition, suppose there exist sets $\calT,\calT'\in2^{\calV}$ satisfying the sparsity constraint in \eqref{eq:sparse_T} and $\vert \calT \vert > \vert \calT' \vert$. Then, there exists at least one time index $k$ such that 
\begin{equation}
\vert\{j: (k,j)\in\calT\}\vert> \vert\{j: (k,j)\in\calT'\}\vert.
\end{equation}
If we take an element $\tilde{j}\in \{j: (k,j)\in\calT\setminus\calT'\}$,  the new set $\calT' \cup \{(k,\tilde{j})\}$ satisfies
\begin{align}
\lv\lc j: (k,j)\in\calT' \rc\cup \{(k,\tilde{j})\}\rv&= \lv\lc j: (k,j)\in\calT' \rc\rv+1 \nonumber\\
&\leq \lv\lc j: (k,j)\in\calT\rc\rv\leq s. \nonumber
\end{align}
So, the new set $\calT' \cup \{(k,\tilde{j})\}$ also satisfies \eqref{eq:sparse_T}, verifying the third condition and completing the proof.
\hfill\qedb

\section{Proof of \Cref{thm:Act_sch_guarantee}}
\label{app:act_schd_pf}
%The proof relies on the optimality of the greedy algorithm when used to solve combinatorial problems with submodular objective function and matroid constraint~\cite{9030055}. 
Consider the problem of minimizing an $\alpha$-supermodular function $f:2^\calV \rightarrow \mathbb{R}$ subject a matroid constraint $(\calV, \calI)$~ i.e.,
\begin{equation} \label{opt:submodular_int_matroid}
\min_{\calA\in 2^{\calV}} f(\calA) \;
\text{s.t.} \;  \calA \in \calI  . 
\end{equation}
Suppose $f^*$ is the optimal solution of \eqref{opt:submodular_int_matroid} and $\tilde{f}$ is the solution returned by \textcolor{black}{a greedy algorithm that starts with $\calA = \emptyset$ and adds elements from $\calV$ one-by-one, so that at step $t$, it updates $\calA_t$ as $\calA_{t+1} = \calA_t \cup e^*$, where $e^* = \text{argmin}_{e \in \calV \backslash \calA_t, \calA_t \cup e \in \calI} f(\calA_t \cup e)$. }
%of the form in \Cref{alg:greedy_algo_epsilon}.} 
Then, from \cite[Theorem 1]{Chamon_2019_MatroidOpt} we have, 
\begin{equation} \label{eq_alpha_sub_guarent}
	\frac{f^* - \tilde{f}}{f^*-f(\emptyset)} \leq 1 - \min \lb\frac{\alpha}{2} , \frac{\alpha}{1+\alpha} \rb,
\end{equation}
where $\alpha$ is the submodularity constant of $f$. %\textcolor{red}{The result in the reference does not have the $\alpha/2$ term.}

%Clearly, our optimization problem in \eqref{eq:opt_energy_epsilon_mod} takes the same form as that of \eqref{opt:submodular_int_matroid} and \Cref{alg:greedy_algo_epsilon} is a greedy algorithm. 
From \Cref{prop:submodular_proof}, our cost function is $\alpha$-supermodular and the constraint set is a matroid. Further, \Cref{alg:greedy_algo_epsilon} is a greedy algorithm of the form described above. For a fixed $\epsilon$, let $E^*(\epsilon)$ be the optimal energy obtained by solving \eqref{eq:opt_energy_epsilon_mod} exactly and $\tilde{E}(\epsilon) = \trace{\lb \matW_\calS + \epsilon \matI \rb^{-1} }$ be the objective function returned by \Cref{alg:greedy_algo_epsilon} with parameter $\epsilon$. Then, using \eqref{eq_alpha_sub_guarent}, we have
\begin{equation} \label{eq_guarantee1}
\frac{\tilde{E}(\epsilon) - E^*(\epsilon)}{n/\epsilon - E^*(\epsilon)} \leq 1 - \beta(\epsilon),
\end{equation}
where $\beta(\eps)>0$ is defined in \Cref{thm:Act_sch_guarantee}. Let, $E^* = \trace{\matW_{\calS^*}^{-1}}$ where $\calS^*$ be the corresponding optimal actuator schedule obtained by solving \eqref{eq:opt_energy_epsilon_mod} with $\eps=0$. We obtain the following inequality,
\begin{equation} \label{eq:estarineq}
	E^* = \trace{\matW_{\calS^*}^{-1} } > \trace{\lb \matW_{\calS^*}+\epsilon\eye \rb^{-1} } \geq E^*(\epsilon).
\end{equation}
Plugging in this inequality into \eqref{eq_guarantee1} leads to
\begin{equation} \label{eq_derived_guarantee1}
	\tilde{E}(\epsilon) \leq \frac{n(1-\beta(\epsilon))}{\epsilon} + \beta(\epsilon) E^*(\epsilon) < \frac{n(1-\beta(\epsilon))}{\epsilon} + \beta(\epsilon) E^*.
\end{equation}
Hence, the proof is complete. \hfill \qedb

\bibliographystyle{IEEEtran}
\bibliography{IEEEabrv,refs}
\end{document}